\documentclass[runningheads]{llncs}
\usepackage[utf8]{inputenc}
\usepackage{graphicx}
% Used for displaying a sample figure. If possible, figure files should
% be included in EPS format.
%
% If you use the hyperref package, please uncomment the following line
% to display URLs in blue roman font according to Springer's eBook style:
% \renewcommand\UrlFont{\color{blue}\rmfamily}

\usepackage{t1enc}
\usepackage{authblk}
\usepackage{amsfonts}
\usepackage{mathtools}
\usepackage{ebproof}
\usepackage{listings}
\usepackage{amssymb}
\usepackage{coqdoc}
\usepackage{amsmath}
\usepackage{url}
\usepackage{hyperref}
\usepackage{float}
\usepackage{cite}
\usepackage{proof}

\newcommand{\ose}{\xrightarrow{e}}
\newcommand{\bos}[4]{|#1, #2, #3| \ose #4}

\newcommand{\Xvar}{\textit{``X''} \: }
\newcommand{\Yvar}{\textit{``Y''} \: }

\numberwithin{equation}{section}

\begin{document}
\title{A Proof Assistant Based \\Formalisation of Core Erlang\thanks{The final authenticated publication is available online at \url{https://doi.org/10.1007/978-3-030-57761-2_7} (Springer link: \url{https://link.springer.com/chapter/10.1007/978-3-030-57761-2_7})}}
%
%\titlerunning{Abbreviated paper title}
% If the paper title is too long for the running head, you can set
% an abbreviated paper title here
%
\author{Péter Bereczky\inst{1}\orcidID{0000-0003-3183-0712} \and
Dániel Horpácsi\inst{1}\orcidID{0000-0003-0261-0091} \and
Simon Thompson\inst{1,2}\orcidID{0000-0002-2350-301X}}%\orcidID{0000-0002-2350-301X}}
\authorrunning{P. Bereczky et al.}
% First names are abbreviated in the running head.
% If there are more than two authors, 'et al.' is used.
%
\institute{Eötvös Loránd University, HU \\
\email{berpeti@inf.elte.hu} and \email{daniel-h@elte.hu} \and
University of Kent, UK \\
\email{S.J.Thompson@kent.ac.uk}}
\maketitle              % typeset the header of the contribution

\begin{abstract}	
	Our research is part of a wider project that aims to investigate and reason about the correctness of scheme-based source code transformations of Erlang programs. In order to formally reason about the definition of a programming language and the software built using it, we need a mathematically rigorous description of that language. 
	
	In this paper, we present our proof-assistant-based formalisation of a subset of Erlang, intended to serve as a base for proving refactorings correct. After discussing how we reused concepts from related work, we show the syntax and semantics of our formal description, including the abstractions involved (e.g. closures). We also present essential properties of the formalisation (e.g. determinism) along with their machine-checked proofs. Finally, we prove the correctness of some simple refactoring strategies.

\keywords{Erlang formalisation \and Formal semantics \and Machine-checked formalisation \and Operational semantics \and Term rewrite system \and Coq}
\end{abstract}

\section{Introduction}

There are a number of language processors, development and refactoring tools for mainstream languages, but most of these tools are not theoretically well-founded: they lack a mathematically precise description of what they do to the source code. In particular, refactoring tools are expected to change programs without affecting their behaviour, but in practice, this property is typically verified by regression testing only. Higher assurance can be achieved by making a formal argument (i.e. a proof) about this property, but neither programming languages nor program transformations are easily formalised.

When arguing about behaviour-preservation of program refactoring, we argue about program semantics. To be able to do this in a precise manner, we need a formal, mathematical definition of the programming language semantics in question, which enables formal verification. Unfortunately, most programming languages lack fully formal definitions, which makes it challenging to deal with them in formal ways. Since we are dedicated to improve trustworthiness of Erlang refactorings via formal verification, we put effort in formalising Erlang and its functional core, i.e.\ Core Erlang. Core Erlang is not merely a subset of Erlang; in fact, Erlang (along with other functional languages) translates to Core Erlang as part of the compilation process.

This paper presents work on the Coq formalisation of a big-step semantics for Core Erlang. In general, if formal semantics is not available for a particular language, one can take the language specification and the reference implementation to build a formalisation thereon; in our case, we could rely not only on these artifacts, but also on some previously published semantics definitions. Thus, we reviewed the existing papers on the Core Erlang language and its semantics, distilled, merged and extended them, to obtain a definition that can be properly embedded in Coq. Using that we also proved some basic properties of the semantics as well as proved simple program equivalences.

The main contributions of this paper are:
\begin{enumerate}
\item The definition of a formal semantics for a sequential subset of Erlang (Core Erlang), based partly on existing formalisations.
\item An implementation for this semantics in the Coq proof assistant.
\item Theorems that formalise a number of properties of this formalisation, e.g. determinism, with their machine-checked proofs.
\item Results on program evaluation and equivalence verification using the semantics definition, all formalised in the Coq proof assistant.
\end{enumerate}

\noindent
The rest of the paper is structured as follows. In Section \ref{Sec:work} we review the existing formalisations of Core Erlang and Erlang, and compare them in order to help understand the construction of our formal semantics. In Section \ref{Sec:syntax_semantics} we describe the proposed formal description, including abstractions, syntax, and semantics, while in Section \ref{Sec:application} a number of applications of the semantics are described. Section \ref{Sec:summary} discusses future work, then concludes.

\section{Related work}\label{Sec:work}

Although there have already been a number of attempts to build a full-featured formal definition of the Erlang programming language, the existing definitions show varying language coverage, and only some of them, covering mostly parallel parts of Core Erlang or Erlang, is implemented in a machine-checked proof system.
This alone would provide a solid motivation for the work presented in this paper, but our ultimate goal is to prove refactoring-related theorems in the Coq proof assistant, so our goal is formalise the semantics of Erlang in Coq in a way that enables flawless verification of program equivalence.

In order to reuse existing results, we have reviewed the extensive related work on formalisations of both Erlang~\cite{fredlund2001framework, fredlund2003verification, vidal2014towards, kerl} and Core Erlang~\cite{de2018bounded, harrison2017coerl, lanese2018cauder, lanese2018theory, lanese2019playing, neuhausser2007abstraction, nishida2016reversible}, and tried to incorporate ideas from all of these sources. We have decided to build the formalisation of Core Erlang as a stepping stone toward the definition of the entire Erlang language.

The vast majority of related work presents small-step operational semantics. In particular, one of our former project members has already defined~\cite{kerl} most elements of sequential Erlang in the K specification language, which could be used both for interpretation and for verification. We wish we could reuse this definition in Coq, but for proofs to be carried out in the proof assistant, it is too fine-grained, so we are seeking a big-step operational definition.

As a matter of fact, most papers addressing the formal definition of Erlang focus on the concurrent part (process management and communication primitives) of the language, which is not relevant to our current formalisation goals. Harrison's formalisation of CoErl~\cite{harrison2017coerl} concentrates on the way that communication works, and in particular focusses on how mailboxes are processed. Although the papers dealing with the sequential parts tend to present different approaches to defining the semantics, the elements of the language covered and the syntax used to describe them appears to be very similar in each paper. However, there are slight differences in the level of detail. Some definitions model the language very closely, whilst some do abstract away particular elements; for instance, unlike~\cite{neuhausser2007abstraction},~\cite{lanese2018theory} describes function applications only for function names. 

There is another notable difference in the existing formalisations from the syntax point of view: some define values as a subset of expressions distinguished by defining them in a different syntactic category~\cite{fredlund2001framework, fredlund2003verification, neuhausser2007abstraction, vidal2014towards}, and some define values as ``ground patterns''~\cite{lanese2018cauder, lanese2018theory, lanese2019playing, nishida2016reversible}, i.e. subset of patterns. Both approaches have their advantages and disadvantages, we will discuss this question in more detail in Section \ref{Sec:syntax_semantics}.

We principally used the work by Lanese et al. on defining reversible semantics for Erlang~\cite{lanese2018cauder, lanese2018theory, lanese2019playing, nishida2016reversible}, who define a language ``basically equivalent to a subset of Core Erlang''~\cite{nishida2016reversible}. Although they do not take Core Erlang functions and their closures into consideration (except for top-level functions), which we needed to define from scratch, their work proved to be a good starting point for defining a big-step operational semantics. In addition, we took the Core Erlang Documentation~\cite{carlsson2000core} and the reference compiler for Core Erlang as reference points for understanding the basic abstractions of the language in more detail. When defining function applications, we took some ideas from a paper embedding Core Erlang into Prolog~\cite{de2018bounded}, and when tackling match expressions, the big-step semantics for FMON~\cite{carlier2012first} proved to be useful. Fredlund's fundamental work~\cite{fredlund2001framework} was very influential, but his Erlang formal semantics section discusses parallel parts mainly.

There were some abstractions missing in almost all papers (e.g. the \verb|let| binding with multiple variables, \verb|letrec|, \verb|map| expressions), for which we had to rely on the informal definitions described in~\cite{carlsson2000core} and the reference implementation. Also, in most of the papers, the global environment is modified in every single step of the execution; in contrast, our semantics is less fine-grained as side-effects have been not implemented yet. Unfortunately, the official language specification document was written in 2004, and there were some new features (e.g. the map data type) introduced to Core Erlang since then. These features do not have an informal description either; however, we took the reference implementation and build the formalisation thereon.

\section{Formal semantics of Core Erlang}\label{Sec:syntax_semantics}

After reviewing related work, we present our formal definition of Core Erlang formalised in Coq. Throughout this section, we will frequently quote the Coq definition; in some cases, we use the Coq syntax and quote literally, but in case of the semantic rules, we turned the consecutive implications into inference rule notation for better readability. The entire formalisation is available on Github~\cite{coreerlang}.

\subsection{Syntax}\label{Sec:syntax}

This section gives a brief overview of the syntax in our formalisation.

\begin{figure}[h]
\begin{minipage}{0.45\textwidth}
	\coqdockw{Inductive} \coqdocvar{Literal} : \coqdockw{Type} :=\coqdoceol
	\coqdocnoindent
	\ensuremath{|} \coqdocvar{Atom} (\coqdocvar{s}: \coqdocvar{string}) \coqdoceol
	\coqdocnoindent
	\ensuremath{|} \coqdocvar{Integer} (\coqdocvar{x} : \coqdocvar{Z})\coqdoceol
	\coqdocnoindent
	\ensuremath{|} \coqdocvar{EmptyList}\coqdoceol
	\coqdocnoindent
	\ensuremath{|} \coqdocvar{EmptyTuple}\coqdoceol
	\coqdocnoindent
	\ensuremath{|} \coqdocvar{EmptyMap}.\coqdoceol

	\caption{Syntax of literals}
	\label{fig:lit_syntax}
\end{minipage}
\hfill
\begin{minipage}{0.45\textwidth}
	\coqdockw{Inductive} \coqdocvar{Pattern} : \coqdockw{Type} :=\coqdoceol
	\coqdocnoindent
	\ensuremath{|} \coqdocvar{PVar}     (\coqdocvar{v} : \coqdocvar{Var})\coqdoceol
	\coqdocnoindent
	\ensuremath{|} \coqdocvar{PLiteral} (\coqdocvar{l} : \coqdocvar{Literal})\coqdoceol
	\coqdocnoindent
	\ensuremath{|} \coqdocvar{PList}  (\coqdocvar{hd} \coqdocvar{tl} : \coqdocvar{Pattern})\coqdoceol
	\coqdocnoindent
	\ensuremath{|} \coqdocvar{PTuple} (\coqdocvar{t} : \coqdocvar{list} \coqdocvar{Tuple})\coqdoceol
	\coqdoceol
	
	\caption{Syntax of patterns}
	\label{fig:pat_syntax}
\end{minipage}
\end{figure}

The syntax of literals and patterns (Figures~\ref{fig:lit_syntax} and \ref{fig:pat_syntax}) is based on the papers mentioned in Section~\ref{Sec:work}. The only addition is the \verb|map| construction (\coqdocvar{EmptyMap} literal); float literals are left out, because in our applications, they can be handled as if they were integers. The tuple pattern is represented with Coq's built-in list, which is constructed inductively.

For the definition of the syntax of expressions, we need the following auxiliary type:

\vspace{0.3cm}
\noindent\qquad
\coqdockw{Definition} \coqdocvar{FunctionIdentifier} : \coqdockw{Type} := \coqdocvar{string} \ensuremath{\times} \coqdocvar{nat}.\coqdoceol
\vspace{0.3cm}

With the help of this type alias and the previous definitions, we can describe the syntax of the expressions (Figure~\ref{fig:syntax}). As mentioned in Section~\ref{Sec:work}, our expression syntax is very similar to the existing definitions found in the related work. The main abstractions are based on~\cite{fredlund2001framework, fredlund2003verification, vidal2014towards} and the additional expressions (e.g. \verb|let|, \verb|letrec|, \verb|apply|, \verb|call|) on~\cite{carlsson2000core, lanese2018cauder, lanese2018theory, lanese2019playing, neuhausser2007abstraction, nishida2016reversible}. However, in our formalisation, we included the \verb|map| type, primitive operations and function calls are handled alike, and in addition, the \coqdocvar{ELet} and \coqdocvar{ELetrec} statements handle multiple bindings at the same time.

\begin{figure}[ht]
	\begin{minipage}{\textwidth}
		\coqdockw{Inductive} \coqdocvar{Expression} : \coqdockw{Type} :=\coqdoceol
		\coqdocnoindent
		\ensuremath{|} \coqdocvar{ELiteral} (\coqdocvar{l} : \coqdocvar{Literal})\coqdoceol
		\coqdocnoindent
		\ensuremath{|} \coqdocvar{EVar}     (\coqdocvar{v} : \coqdocvar{Var})\coqdoceol
		\coqdocnoindent
		\ensuremath{|} \coqdocvar{EFunSig}  (\coqdocvar{f} : \coqdocvar{FunctionIdentifier})\coqdoceol
		\coqdocnoindent
		\ensuremath{|} \coqdocvar{EFun}     (\coqdocvar{vl} : \coqdocvar{list} \coqdocvar{Var}) (\coqdocvar{e} : \coqdocvar{Expression})\coqdoceol
		\coqdocnoindent
		\ensuremath{|} \coqdocvar{EList}  (\coqdocvar{hd} \coqdocvar{tl} : \coqdocvar{Expression})\coqdoceol
		\coqdocnoindent
		\ensuremath{|} \coqdocvar{ETuple} (\coqdocvar{l} : \coqdocvar{list} \coqdocvar{Expression}) \coqdoceol
		\coqdocnoindent
		\ensuremath{|} \coqdocvar{ECall}  (\coqdocvar{f}: \coqdocvar{string})     (\coqdocvar{l} : \coqdocvar{list} \coqdocvar{Expression}) \coqdoceol
		\coqdocnoindent
		\ensuremath{|} \coqdocvar{EApply} (\coqdocvar{exp}: \coqdocvar{Expression})     (\coqdocvar{l} : \coqdocvar{list} \coqdocvar{Expression})\coqdoceol
		\coqdocnoindent
		\ensuremath{|} \coqdocvar{ECase}  (\coqdocvar{e} : \coqdocvar{Expression})            (\coqdocvar{l} : \coqdocvar{list} \coqdocvar{Clause})\coqdoceol
		\coqdocnoindent
		\ensuremath{|} \coqdocvar{ELet}   (\coqdocvar{s} : \coqdocvar{list} \coqdocvar{Var})             (\coqdocvar{el} : \coqdocvar{list} \coqdocvar{Expression}) (\coqdocvar{e} : \coqdocvar{Expression})\coqdoceol
		\coqdocnoindent
		\ensuremath{|} \coqdocvar{ELetrec} (\coqdocvar{fnames} : \coqdocvar{list} \coqdocvar{FunctionIdentifier}) (\coqdocvar{fs}\footnote{This is the list of the defined functions (list of variable lists and body expressions)} : \coqdocvar{list} ((\coqdocvar{list} \coqdocvar{Var}) \ensuremath{\times} \coqdocvar{Expression})) (\coqdocvar{e} : \coqdocvar{Expression})\coqdoceol
		\coqdocnoindent
		\ensuremath{|} \coqdocvar{EMap}   (\coqdocvar{kl} \coqdocvar{vl} : \coqdocvar{list} \coqdocvar{Expression})   \coqdoceol
		\coqdocindent{0.50em}
		\coqdockw{with} \coqdocvar{Clause} : \coqdockw{Type} :=\coqdoceol
		\coqdocindent{0.50em}
		\ensuremath{|} \coqdocvar{CCons} (\coqdocvar{p} : \coqdocvar{Pattern})   (\coqdocvar{guard} \coqdocvar{e} : \coqdocvar{Expression}).\coqdoceol
	\end{minipage}
\caption{Syntax of expressions}
\label{fig:syntax}
\end{figure}

\subsubsection{Values}\label{Sec:Values}

In Core Erlang, literals, lists, tuples, maps, and closures can be values, i.e. results of the evaluation of other expressions. As pointed out in Section~\ref{Sec:work}, there are two approaches discussed in the related work: either values are related to patterns~\cite{lanese2018cauder, lanese2018theory, lanese2019playing, nishida2016reversible}
or values are related to expressions~\cite{fredlund2001framework, fredlund2003verification, neuhausser2007abstraction, vidal2014towards}. We have decided to relate values to expressions, because semantically values are derived from expressions and not patterns. Moreover, there are three methods to define the aforementioned relation of values and expressions:

\begin{itemize}
	\item Values are not a distinct syntactic category, so they are defined with an explicit subset relation; 
	\item Values are syntactically distinct and are used in the definition of expressions~\cite{fredlund2001framework,fredlund2003verification,vidal2014towards};
	\item Values are syntactically distinct, but there is no explicit subset relation between values and expressions~\cite{neuhausser2007abstraction}.
\end{itemize}

When values are not defined as a distinct syntactic set (or as a semantic domain), a subset relation has to be defined that tells whether an expression represents a value. In Coq, this subset relation is defined by a judgment on expressions, but this would require a proof every time an expression is handled as a value: the elements of a subset are defined by a pair, i.e. the expression itself and a proof that the expression is a value. While this is a feasible approach, it generates lots of unnecessary  trivial statements to prove in the dynamic semantics: instead of using a list of values, a list of expressions has to be used where proofs must be given about the head and tail being values (see the example in Section \ref{Sec:semantics} for more details about list evaluation). In addition, the main issue with these approaches is that values do not always form a proper subset of patterns or expressions~\cite{carlsson2000core}: when lambda functions and function identifiers (signatures) are considered, values must include closures, which, on the other hand, are not present in the syntax.

For the reasons above, we define values separately from syntax, but unlike~\cite{neuhausser2007abstraction}, we  include function closures in the definition rather than  functions themselves. In fact, we define values as a semantic domain, to which expressions are evaluated (see Figure~\ref{fig:value_syntax}). This distinction of values allows the semantics to be defined in a big-step way with domain changing (from expressions to values). Naturally, this approach causes duplication in the syntax definition (i.e. value syntax is not reused, unlike in~\cite{fredlund2001framework,fredlund2003verification,vidal2014towards}), but it saves a lot when proving theorems about values.

\begin{figure}[h]
	\begin{minipage}{\textwidth}
		\coqdockw{Inductive} \coqdocvar{Value} : \coqdockw{Type} :=\coqdoceol
		\coqdocnoindent
		\ensuremath{|} \coqdocvar{VLiteral} (\coqdocvar{l} : \coqdocvar{Literal})\coqdoceol
		\coqdocnoindent
		\ensuremath{|} \coqdocvar{VClosure}\footnote{A closure represents a function definition together with an environment representing the context in which the function was defined: \coqdocvar{ref} will be the environment or a reference to it, \coqdocvar{vl} will be the function parameter list and \coqdocvar{e} will be the body expression.  \coqdocvar{Environment} is defined in Section~\ref{Sec:semantics} below.} (\coqdocvar{ref} : \coqdocvar{Environment} + \coqdocvar{FunctionIdentifier}) (\coqdocvar{vl} : \coqdocvar{list} \coqdocvar{Var}) (\coqdocvar{e} : \coqdocvar{Expression})\coqdoceol
		\coqdocnoindent
		\ensuremath{|} \coqdocvar{VList} (\coqdocvar{vhd} \coqdocvar{vtl} : \coqdocvar{Value})\coqdoceol
		\coqdocnoindent
		\ensuremath{|} \coqdocvar{VTuple} (\coqdocvar{vl} : \coqdocvar{list} \coqdocvar{Value})\coqdoceol
		\coqdocnoindent
		\ensuremath{|} \coqdocvar{VMap} (\coqdocvar{kl} \coqdocvar{vl} : \coqdocvar{list} \coqdocvar{Value}).\coqdoceol
	\end{minipage}
	\caption{Syntax of values}
	\label{fig:value_syntax}
\end{figure}

In the upcoming sections, we will use the following syntax shortcuts:

\begin{align*}
\textit{tt} := \textit{VLiteral}\ (\textit{Atom}\ \textit{``true''})
\\
\textit{ff} := \textit{VLiteral}\ (\textit{Atom}\ \textit{``false''})
\end{align*}

\subsection{Semantics}\label{Sec:semantics}

We define a big-step operational semantics for the Core Erlang syntax described in the previous section. In order to do so, we need to define environment types to be included in the evaluation configuration. In particular, we define \emph{environments} which hold values of variable symbols and function identifiers, and separately we define \emph{closure environments} to store closure-local context.

\subsubsection{Environment}

The variable environment stores the bindings made with pattern matching in parameter passing as well as in \verb|let|, \verb|letrec|, \verb|case| (and \verb|try|) statements. Note that the bindings may include both variable names and function identifiers, with the latter being associated with function expressions in normal form (closures). In addition, there are top-level functions in the language, and they too are stored in this environment, similarly to those defined with the \verb|letrec| statement.

Top-level, global definitions could be stored in a separate environment in a separate configuration cell, but we decided to handle all bindings in one environment, because this separation would cause a lot of duplication in the semantic rules and in the actual Coq implementation. Therefore, there is one union type to construct a single environment for function identifiers and variables, both local and global. It is worth mentioning that in our case the environment always stores values since Core Erlang evaluation is strict, i.e. first expressions evaluate to some values, then variables can be bound to these values.

We define the environment in the following way:

\vspace{0.3cm}
\noindent
\coqdockw{Definition} \coqdocvar{Environment} : \coqdockw{Type} := \coqdocvar{list} ((\coqdocvar{Var} + \coqdocvar{FunctionIdentifier}) \ensuremath{\times} \coqdocvar{Value}).\coqdoceol
\vspace{0.3cm}

We denote this mapping by $\Gamma$ in what follows, whilst $\varnothing$ is used to denote the empty environment. We also define a number of helper functions to manage environments, which will be used in formal proofs below. For the sake of simplicity, we omit the actual Coq definitions of these operations and rather provide a short summary of their effect.

\begin{itemize}
	\item \coqdocvar{get\_value} $\Gamma$ \coqdocvar{key}: Returns the value associated with \coqdocvar{key} in $\Gamma$.
	\item \coqdocvar{insert\_value} $\Gamma$ \coqdocvar{key} \coqdocvar{value}: Inserts the \emph{(key,value)} pair into $\Gamma$. If this \coqdocvar{key} has already been present, it will replace the original binding with the new one (according to~\cite{carlsson2000core}, section 6). The next three function is implemented with this replacing insertion.
	\item \coqdocvar{add\_bindings} \coqdocvar{bindings} $\Gamma$: Appends to $\Gamma$ the variable-value bindings given in \coqdocvar{bindings}.
	\item \coqdocvar{append\_vars\_to\_env} \coqdocvar{varlist} \coqdocvar{valuelist} $\Gamma$: It is used for \verb|let| statements and adds the bindings (\coqdocvar{varlist} elements to \coqdocvar{valuelist} elements) to $\Gamma$.
	\item \coqdocvar{append\_funs\_to\_env} \coqdocvar{funsiglist} \coqdocvar{param-bodylist} $\Gamma$: Appends to $\Gamma$ function sig-nature-closure pairs. The closures are constructed from \coqdocvar{param-bodylist} which contains parameter lists and body expressions.
\end{itemize}

\subsubsection{Closure Environment}\label{Sec:closures}

In Core Erlang, function expressions evaluate to closures. Closures have to be modeled in the semantics carefully in order to capture the bindings in the context of the closure properly. The following Core Erlang program shows an example where we need to explicitly store a binding context to closures:

\begin{lstlisting}[language=Haskell]
let X = 5 in
  let Y = fun() -> X in
    let X = 10 in
      apply Y()
\end{lstlisting}

The semantics needs to make sure that we apply static binding here: the function \coqdocvar{Y} has to return \coqdocvar{5} rather than \coqdocvar{10}. This requires the \coqdocvar{Y}'s context to be stored along with its body, which is done by coupling them into a \emph{function closure}.

When evaluating a function expression a closure is created. This is a copy of the current environment, an expression (the function body), and a variable list (the parameters of the function).

This information could be encoded with the \coqdocvar{VClosure} constructor in the \coqdocvar{Value} inductive type using the actual environment (see Figure \ref{fig:value_syntax}), however, this cannot be used when the function is recursive. Here is an example:

\begin{lstlisting}[language=Haskell]
letrec 'f1'/0 = fun() -> apply 'f1'/0()
\end{lstlisting}

In Core Erlang, \verb|letrec| allows definition of recursive functions, so the body of the \verb|'f1'/0| must be evaluated in an environment which stores \verb|'f1'/0| mapped to a closure. But this closure contains the environment in which the body expression must be evaluated and that is the same environment mentioned before. So the this is a recursion in embedded closures in the environment. Here is the problem visualized:

\vspace{0.3cm}
\noindent
\{'f1'/0 : \coqdocvar{VClosure} \{'f1'/0 : \coqdocvar{VClosure} \{'f1'/0 : ...\}\} [] (\verb|apply 'f1'/0()|) \}
\vspace{0.3cm}

We do not apply any syntactical changes to the function body, but we solve this issue by introducing the concept of closure environments. The idea is that the name of the function (variable name or function identifier) is mapped to the application environment (this way, it can be used as a reference). It is enough to encode the function's name with the \coqdocvar{VClosure} constructor. This closure environment can only be used together with the use of the environment and items cannot be deleted from it.

\vspace{0.3cm}
\noindent
\coqdockw{Definition} \coqdocvar{Closures} : \coqdockw{Type} := \coqdocvar{list} (\coqdocvar{FunctionIdentifier} \ensuremath{\times} \coqdocvar{Environment}).\coqdoceol
\vspace{0.3cm}

All in all, closures will ensure that the functions will be evaluated in the right environments (a fully formal example is described in Section \ref{Sec:programs}). There are two ways of using their evaluation environment (\emph{ref} attribute of \coqdocvar{Environment + FunctionIdentifier} type):

\begin{itemize}
	\item Either using the concrete environment from the closure value directly if \coqdocvar{ref} is from the type \coqdocvar{Environment};
	\item Or using the reference and the closure environment to get the evaluation environment when the type of \emph{ref} is \coqdocvar{FunctionIdentifier}.
\end{itemize}

In the next sections, we denote this function-environment mapping with $\Delta$, and $\varnothing$ denotes the empty closure environment. Similarly to ordinary environments, closure environments are managed with a number of simple helper functions; like before, we omit the formal definition of these and provide an informative summary instead.

\begin{itemize}
	\item \coqdocvar{get\_env} \coqdocvar{key} $\Delta$ : Returns the environment associated with \coqdocvar{key} in $\Delta$ if \coqdocvar{key} is a \coqdocvar{FunctionIdentifier}. If \coqdocvar{key} is an \coqdocvar{Environment}, the function simply returns it. This function is implemented with the help of the next function.
	\item \coqdocvar{get\_env\_from\_closure} \coqdocvar{key} $\Delta$: Returns the environment associated with \coqdocvar{key}. If the \coqdocvar{key} is not present in the $\Delta$, it returns $\varnothing$.
	\item \coqdocvar{set\_closure} $\Delta$ \coqdocvar{key} $\Gamma$: Adds \coqdocvar{(key, $\Gamma$)} pair to $\Delta$. If \coqdocvar{key} exists in $\Delta$, its value will be overwritten. Used in the next function.
	\item \coqdocvar{append\_funs\_to\_closure} \coqdocvar{fnames} $\Delta$ $\Gamma$: Inserts a  \coqdocvar{($funid_i$, $\Gamma$)} binding into $\Delta$ for every $funid_i$ function identifier in \coqdocvar{fnames}. 
\end{itemize}

\subsubsection{Dynamic Semantics}

The presented semantics, theorems, tests and proofs are available in Coq on the project's Github repository~\cite{coreerlang}.

With the language syntax and the execution environment defined, we are ready to define the big-step semantics for Core Erlang. The operational semantics is denoted by 

\begin{align}
\nonumber
\bos{\Gamma}{\Delta}{\coqdocvar{e}}{\coqdocvar{v}} ::= \coqdocvar{eval\_expr}\ \Gamma\ \Delta\ \coqdocvar{e}\ \coqdocvar{v}
\end{align}

\noindent
where \coqdocvar{eval\_expr} is the semantic relation in Figure \ref{fig:semantics}. This means that \coqdocvar{e} \coqdocvar{Expression} evaluates to \coqdocvar{v} \coqdocvar{Value} in the environment $\Gamma$ and closure environment $\Delta$.

Prior to presenting the rules of the operational semantics, we define a helper for pointwise evaluation of multiple independent expressions: \coqdocvar{eval\_all} states that a list of expressions evaluates to a list of values.

\vspace{0.3cm}
\noindent
\coqdocvar{eval\_all $\Gamma$ $\Delta$ exps vals} := \coqdoceol
\coqdocvar{length} \coqdocvar{exps} = \coqdocvar{length} \coqdocvar{vals} $\Longrightarrow$ \coqdoceol
(\coqdockw{\ensuremath{\forall}} \coqdocvar{exp} : \coqdocvar{Expression}, \coqdockw{\ensuremath{\forall}} \coqdocvar{val} : \coqdocvar{Value},\coqdoceol
\coqdocindent{2.00em}
\coqdocvar{In} (\coqdocvar{exp}, \coqdocvar{val}) (\coqdocvar{combine} \coqdocvar{exps} \coqdocvar{vals}) \ensuremath{\Longrightarrow}\coqdoceol
\coqdocindent{3.00em}
$\bos{\Gamma}{\Delta}{\coqdocvar{exp}}{\coqdocvar{val}}$)\coqdoceol
\vspace{0.3cm}

With the help of this proposition, we will be able to define the semantics of function calls, tuples, and expressions of other kinds in a more readable way. In this definition, we reuse \coqdocvar{length}, \coqdocvar{combine}, \coqdocvar{nth} and \coqdocvar{In} from Coq's built-ins~\cite{coqdocref}.

There is another auxiliary definition which will simplify the main definition: (\coqdocvar{match\_clause} (\coqdocvar{v} : \coqdocvar{Value}) (\coqdocvar{cs} : \coqdocvar{list Clause}) (\coqdocvar{i} : \coqdocvar{nat})) tries to match the \coqdocvar{i}th pattern given in the list of clauses (\coqdocvar{cs}) with the value \coqdocvar{v}. The result is optional; if the \coqdocvar{i}th clause does not match the value, it returns nothing or on successful matching it returns the guard and body expressions with the pattern variable-value bindings from the \coqdocvar{i}th clause.

\begin{figure}[ht!]
	\fontsize{9}{10}
	\coqdockw{Inductive} \coqdocvar{eval\_expr} : \coqdocvar{Environment} \ensuremath{\rightarrow} \coqdocvar{Closures} \ensuremath{\rightarrow} \coqdocvar{Expression} \ensuremath{\rightarrow} \coqdocvar{Value} \ensuremath{\rightarrow} \coqdockw{Prop} :=\coqdoceol
	
	\begin{minipage}{0.46\textwidth}
		\begin{equation}
		\begin{prooftree}
		\infer0{\bos{\Gamma}{\Delta}{\coqdocvar{ELiteral}\ \coqdocvar{l }}{\coqdocvar{VLiteral}\ \coqdocvar{l}}}
		\end{prooftree}
		\label{OS:lit}
		\end{equation}
	\end{minipage}
	\hfill
	\begin{minipage}{0.53\textwidth}
		\begin{equation}
		\begin{prooftree}
		\infer0{\bos{\Gamma}{\Delta}{\coqdocvar{EVar}\ \coqdocvar{s}}{\coqdocvar{get\_value}\  \Gamma\ (\coqdocvar{inl}\ \coqdocvar{s})}}
		\end{prooftree}
		\label{OS:var}
		\end{equation}
	\end{minipage}
	
	\vspace*{0.2cm}
	
	\begin{equation}
	\begin{prooftree}
	\infer0{\bos{\Gamma}{\Delta}{\coqdocvar{EFunSig}\ \coqdocvar{fsig}}{\coqdocvar{get\_value}\ \Gamma\ (\coqdocvar{inr}\ \coqdocvar{fsig})}}
	\end{prooftree}
	\label{OS:funsig}
	\end{equation}
	
	\begin{equation}
	\begin{prooftree}
	\infer0{\bos{\Gamma}{\Delta}{\coqdocvar{EFun}\ \coqdocvar{vl}\ \coqdocvar{e}}{\coqdocvar{VClosure}\ (\coqdocvar{inl}\ \Gamma)\ \coqdocvar{vl}\ \coqdocvar{e}}}
	\end{prooftree}
	\label{OS:fun}
	\end{equation}
	
	\begin{minipage}{0.49\textwidth}
		\begin{equation}
		\begin{prooftree}
		\hypo{eval\_all\ \Gamma\ \Delta\ \coqdocvar{exps}\ \coqdocvar{vals}}
		\infer1{\bos{\Gamma}{\Delta}{\coqdocvar{ETuple}\ \coqdocvar{exps}}{\coqdocvar{VTuple}\ \coqdocvar{vals}}}
		\end{prooftree}
		\label{OS:tuple}
		\end{equation}
	\end{minipage}
	\hfill
	\begin{minipage}{0.49\textwidth}
		\begin{equation}
		\begin{prooftree}
		\hypo{\bos{\Gamma}{\Delta}{\coqdocvar{hd}}{\coqdocvar{hdv}}}
		\hypo{\bos{\Gamma}{\Delta}{\coqdocvar{tl}}{\coqdocvar{tlv}}}
		\infer2{\bos{\Gamma}{\Delta}{\coqdocvar{EList}\ \coqdocvar{hd}\ \coqdocvar{tl}}{\coqdocvar{VList}\ \coqdocvar{hdv}\ \coqdocvar{tlv}}}
		\end{prooftree}
		\label{OS:list}
		\end{equation}
	\end{minipage}
	
	\vspace*{0.2cm}
	
	For the next rule we introduce $no\_previous\_match\ \coqdocvar{i}\ \Delta\ \Gamma\ \coqdocvar{cs}\ \coqdocvar{v} := (\forall \coqdocvar{j} : \coqdocvar{nat},\ \coqdocvar{j} < \coqdocvar{i} \Longrightarrow (\coqdockw{\ensuremath{\forall}}\ (\coqdocvar{gg}, \coqdocvar{ee} : \coqdocvar{Expression}),\ (\coqdocvar{bb} : \coqdocvar{list}\ (\coqdocvar{Var} \times \coqdocvar{Value}),\ \coqdocvar{match\_clause}\ \coqdocvar{v}\ \coqdocvar{cs}\ \coqdocvar{j} = \coqdocvar{Some}\ (\coqdocvar{gg},\ \coqdocvar{ee},\ \coqdocvar{bb}) \ensuremath{\Longrightarrow} (\bos{\coqdocvar{add\_bindings}\ \coqdocvar{bb}\ \Gamma}{\Delta}{\coqdocvar{gg }}{\coqdocvar{ff} })))$.
	
	\begin{equation}
	\begin{prooftree}
	\hypo{\coqdocvar{match\_clause}\ \coqdocvar{v}\ \coqdocvar{cs}\ \coqdocvar{i} = \coqdocvar{Some}\ (\coqdocvar{guard}, \coqdocvar{exp}, \coqdocvar{bindings})}
	\infer[no rule]1{\bos{\coqdocvar{add\_bindings}\ \coqdocvar{bindings}\ \Gamma}{\Delta}{\coqdocvar{guard}}{\coqdocvar{tt}}}
	\infer[no rule]1{\bos{\coqdocvar{add\_bindings}\ \coqdocvar{bindings}\ \Gamma}{\Delta}{\coqdocvar{exp}}{\coqdocvar{v'}}}
	\infer[no rule]1{\bos{\Gamma}{\Delta}{\coqdocvar{e}}{\coqdocvar{v}}}
	\infer[no rule]1{no\_previous\_match\ \coqdocvar{i}\ \Delta\ \Gamma\ \coqdocvar{cs}\ \coqdocvar{v}}
	\infer1{\bos{\Gamma}{\Delta}{\coqdocvar{ECase}\ \coqdocvar{e}\ \coqdocvar{cs}}{\coqdocvar{v'}}}
	\end{prooftree}
	\label{OS:case}
	\end{equation}
	
	\vspace*{0.2cm}
	
	\begin{equation}
	\begin{prooftree}
	\hypo{eval\_all\ \Gamma\ \Delta\ \coqdocvar{params}\ \coqdocvar{vals}}
	\hypo{\coqdocvar{eval}\ \coqdocvar{fname}\ \coqdocvar{vals} = \coqdocvar{v}}
	\infer2{\bos{\Gamma}{\Delta}{\coqdocvar{ECall}\ \coqdocvar{fname}\ \coqdocvar{params}}{\coqdocvar{v}}}
	\end{prooftree}
	\label{OS:call}
	\end{equation}
	
	\vspace*{0.2cm}
	
	\begin{equation}
	\begin{prooftree}
	\hypo{eval\_all\ \Gamma\ \Delta\ \coqdocvar{params}\ \coqdocvar{vals}}
	\hypo{\bos{\Gamma}{\Delta}{\coqdocvar{exp}}{\coqdocvar{VClosure}\ \coqdocvar{ref}\ \coqdocvar{var\_list}\ \coqdocvar{body}}}
	\infer[no rule]2{\bos{\coqdocvar{append\_vars\_to\_env}\ \coqdocvar{var\_list}\ \coqdocvar{vals}\ (\coqdocvar{get\_env}\ \coqdocvar{ref}\ \Delta)}{\Delta}{body}{v}}
	\infer1{\bos{\Gamma}{\Delta}{\coqdocvar{EApply}\ \coqdocvar{exp}\ \coqdocvar{params}}{\coqdocvar{v}}}
	\end{prooftree}
	\label{OS:apply}
	\end{equation}
	
	\vspace*{0.2cm}
	
	\begin{equation}
	\begin{prooftree}
	\hypo{eval\_all\ \Gamma\ \Delta\ \coqdocvar{exps}\ \coqdocvar{vals}}
	\hypo{\bos{\coqdocvar{append\_vars\_to\_env}\ \coqdocvar{vars}\ \coqdocvar{vals}\ \Gamma}{\Delta}{e}{\coqdocvar{v}}} 
	\infer2{\bos{\Gamma}{\Delta}{\coqdocvar{ELet}\ \coqdocvar{vars}\ \coqdocvar{exps}\ \coqdocvar{e}}{\coqdocvar{v}}}
	\end{prooftree}
	\label{OS:let}
	\end{equation}
	
	\vspace*{0.2cm}
	
	For the following rule we introduce $\Gamma' ::= \coqdocvar{append\_funs\_to\_env}\ \coqdocvar{fnames}\ \coqdocvar{funs}\ \Gamma$
	
	\vspace*{0.2cm}
	
	\begin{equation}
	\begin{prooftree}
	\hypo{\coqdocvar{length}\ \coqdocvar{funs} = \coqdocvar{length}\ \coqdocvar{fnames}}
	\infer[no rule]1{\bos{\Gamma'}{\coqdocvar{append\_funs\_to\_closure}\ \coqdocvar{fnames}\ \Delta\ \Gamma'}{\coqdocvar{e}}{\coqdocvar{v}}} 
	\infer1{\bos{\Gamma}{\Delta}{\coqdocvar{ELetrec}\ \coqdocvar{fnames}\ \coqdocvar{funs}\ \coqdocvar{e}}{\coqdocvar{v}}}
	\end{prooftree}
	\label{OS:letrec}
	\end{equation}
	
	\vspace*{0.2cm}
	
	\begin{equation}
	\begin{prooftree}
	\hypo{eval\_all\ \Gamma\ \Delta\ \coqdocvar{kl}\ \coqdocvar{kvals}}
	\hypo{eval\_all\ \Gamma\ \Delta\ \coqdocvar{vl}\ \coqdocvar{vvals}}
	\hypo{\coqdocvar{length}\ \coqdocvar{kl} = \coqdocvar{length}\ \coqdocvar{vl}}
	\infer3{\bos{\Gamma}{\Delta}{\coqdocvar{EMap}\ \coqdocvar{kl}\ \coqdocvar{vl}}{\coqdocvar{VMap}\ \coqdocvar{kvals}\ \coqdocvar{vvals}}}
	\end{prooftree}
	\label{OS:map}
	\end{equation}
	
	\caption{The big-step operational semantics of Core Erlang}
	\label{fig:semantics}
\end{figure}

The formal definition of the proposed operational semantics for Core Erlang is presented in Figure~\ref{fig:semantics}. We remind the reader that the figure presents the actual Coq definition, but the inductive cases are formatted as inference rules. We also note that this big-step definition is partly based on the small-step definition discussed in~\cite{lanese2018theory, lanese2019playing, nishida2016reversible} and in some aspects on the big-step semantics in~\cite{carlier2012first,de2018bounded}. In addition, for most of the language elements defined an informal definition is available in~\cite{carlsson2000core}. In the next paragraphs, we provide short explanations of the less trivial rules.

\begin{itemize}
	\item Rule \ref{OS:case}: At first, the case expression \coqdocvar{e} must be evaluated to some \coqdocvar{v} value. Then this \coqdocvar{v} must match to the pattern (\coqdocvar{match\_clause} function) of the specified \coqdocvar{i}th clause. This match provides the guard, the body expressions of the clause and also the pattern variable binding list. The guard must be evaluated to \coqdocvar{tt} in the extended environment with the result of the pattern matching (the binding list mentioned before). The \coqdocvar{no\_previous\_match} states, that for every clause before the \coqdocvar{i}th one the pattern matching cannot succeed or the guard expression evaluates in the extended environment to \coqdocvar{ff}. Thereafter the evaluation of the body expression can continue in this environment.
	
	\item Rule \ref{OS:call}: At first, the parameters must be evaluated to values. Then these values are passed to the auxiliary \coqdocvar{eval} function which simulates the behaviour of inter-module function calls (e.g. the addition inter-module call is represented in Coq with the addition of numbers). This results in a value which will be the result of the \coqdocvar{ECall} evaluation.
	
	\item Rule \ref{OS:apply}: This rule works in similar way to the one described in~\cite{de2018bounded} with the addition of closures. To use this rule, first \coqdocvar{exp} has to be evaluated to a closure. Moreover, every parameter must be evaluated to a value. Finally, the closure's body expression evaluates to the result in an extended environment which is constructed from the parameter variable-value bindings and the evaluation environment of the closure. This environment can be acquired from the closure environment indirectly or it is present in the closure value itself (Section \ref{Sec:closures}).
	\item Rule \ref{OS:let}: At first, every expression given must be evaluated to a value. Then the body of the \verb|let| expression must be evaluated in the original environment extended with the variable-value bindings.
	\item Rule \ref{OS:letrec}: From the functions described (a list of variable list and body expressions), closures will be created and appended to the environment and closure environment associated with the given function identifiers (\coqdocvar{fnames}). In these modified contexts the evaluation continues.
	\item Rule \ref{OS:map}: Introduces the evaluation for maps. This rule states that every key in the map's key list and value list must be evaluated to values resulting in two lists of values (for the map keys and their associated values) from which the value map is constructed. 
	In the future, this evaluation must be modified, because the normal form of maps cannot contain duplicate keys, moreover it is ordered based on these keys, according to the reference implementation.
\end{itemize}

After discussing these rules, we show an example where the approach in which values are defined as a subset of expressions is more difficult to work with.
Let us consider a unary operator (\verb|val|) on expressions which marks the values of the expressions. With the help of this operator, the type of values can be defined: 
\begin{align}
\nonumber
\coqdocvar{Value} ::= \{e : \coqdocvar{Expression}\ |\ e\ \coqdocvar{val}\}.
\end{align}
\noindent
Let us consider the key ways in which this  would modify our semantics.

\begin{itemize}
	\item \coqdocvar{Environment} \ensuremath{\rightarrow} \coqdocvar{Closures} \ensuremath{\rightarrow} \coqdocvar{Expression} \ensuremath{\rightarrow} \coqdocvar{Expression} \ensuremath{\rightarrow} \coqdockw{Prop} would be the type of \coqdocvar{eval\_expr}. This way an additional proposition is needed which states that values are expressions in normal form, \emph{i.e.} they cannot be used on the left side of the rewriting rules.
	\item The expressions which are in normal form could not be rewritten.
	\item Function definitions have to be handled as values
	\item Because of the strictness of Core Erlang, the derivation rules change, additional checks are needed in the preconditions, e.g. in the rule \ref{OS:list}:
		\begin{equation}
		\nonumber
		\begin{prooftree}
		\hypo{\coqdocvar{tlv}\ \coqdocvar{val}}
		\infer[no rule]1{\coqdocvar{hdv}\ \coqdocvar{val}}
		\hypo{\bos{\Gamma}{\Delta}{\coqdocvar{hd}}{\coqdocvar{hdv}} \lor \coqdocvar{hd} = \coqdocvar{hdv}}
		\infer[no rule]1{\bos{\Gamma}{\Delta}{\coqdocvar{tl}}{\coqdocvar{tlv}} \lor \coqdocvar{tl} = \coqdocvar{tlv}}
		\infer2{\bos{\Gamma}{\Delta}{\coqdocvar{EList}\ \coqdocvar{hd}\ \coqdocvar{tl}}{\coqdocvar{VList}\ \coqdocvar{hdv}}\ \coqdocvar{tlv}}
		\end{prooftree}
		\label{OS:list2}
		\end{equation}
\end{itemize}

This approach has the same expressive power as the presented one, but it has more preconditions to prove while using it. For reason, argue that our formalisation is easier to use.

\subsubsection{Proofs of properties of the semantics}

We have also managed to formalise and prove theorems about the attributes of the operations, auxiliary functions and the semantics. Here we present two of these together with proof sketches.

\begin{theorem}[Determinism]\\
	\coqdockw{\ensuremath{\forall}} $(\Gamma : \coqdocvar{Environment}), (\Delta : \coqdocvar{Closures})$, (\coqdocvar{e} : \coqdocvar{Expression}), (\coqdocvar{$v_1$} : \coqdocvar{Value}), \\ $\bos{\Gamma}{\Delta}{\coqdocvar{e}}{v_1}$ \ensuremath{\Longrightarrow} (\coqdockw{\ensuremath{\forall}}\coqdocvar{$v_2$} : \coqdocvar{Value}, $\bos{\Gamma}{\Delta}{\coqdocvar{e}}{v_2}$ \ensuremath{\Longrightarrow} \coqdocvar{$v_1$} = \coqdocvar{$v_2$}).
\end{theorem}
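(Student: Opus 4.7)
The plan is to proceed by induction on the derivation of $\bos{\Gamma}{\Delta}{\textit{e}}{v_1}$, and in each case invert the second derivation $\bos{\Gamma}{\Delta}{\textit{e}}{v_2}$. Because each rule in Figure~\ref{fig:semantics} is keyed to a distinct head constructor of $e$, inversion immediately yields the matching premises for $v_2$; applying the induction hypothesis to each corresponding sub-derivation then gives equality of the intermediate values, from which $v_1 = v_2$ follows.

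Before the main induction I would prove an auxiliary determinism lemma for \textit{eval\_all}: if \textit{eval\_all} $\Gamma$ $\Delta$ \textit{exps} \textit{vals}$_1$ and \textit{eval\_all} $\Gamma$ $\Delta$ \textit{exps} \textit{vals}$_2$ both hold, and every expression in \textit{exps} has at most one value under $\bos{\Gamma}{\Delta}{\cdot}{\cdot}$, then \textit{vals}$_1 = $ \textit{vals}$_2$. This reduces to list extensionality: the length premises align the two lists, and pointwise determinism closes each coordinate. With this lemma in hand, the list-shaped premises in the cases for \textit{ETuple}, \textit{ECall}, \textit{EMap}, \textit{EApply}, \textit{ELet}, and \textit{ELetrec} are discharged as soon as the induction hypothesis covers the individual sub-expressions.

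The routine cases proceed as follows. Literals, variables, function identifiers, and lambdas (\textit{EFun}) are immediate, since the rule fixes the value as a syntactic function of $\Gamma$ and $e$. \textit{EList} applies the IH to both sub-derivations. \textit{EApply} and \textit{ECall} combine the \textit{eval\_all} lemma with the IH on the function expression; for \textit{EApply}, the closure's \textit{ref}, parameter list, and body are pinned down by the IH, so a second use of the IH on the body derivation yields a unique result. \textit{ELetrec} is handled similarly, because the extended environment $\Gamma'$ and the updated closure environment are syntactic functions of $\Gamma$, $\Delta$, \textit{fnames} and \textit{funs}, so the IH on the body in the extended contexts suffices.

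The main obstacle will be the \textit{ECase} rule, since clause selection is not directly syntax-directed: both derivations commit to indices $i_1$ and $i_2$ satisfying \textit{match\_clause}, a \textit{tt} guard, and the \textit{no\_previous\_match} side condition. The IH on the scrutinee $e$ forces both derivations to match against the same $v$, so \textit{match\_clause} behaves identically at every index. Suppose for contradiction $i_1 \neq i_2$ and, without loss of generality, $i_1 < i_2$. Then the second derivation's \textit{no\_previous\_match} premise instantiated at $j = i_1$ forces the guard at index $i_1$ to evaluate to \textit{ff}, while the first derivation's premise yields \textit{tt}. Applying the IH to that guard sub-derivation gives $\textit{tt} = \textit{ff}$, a contradiction, so $i_1 = i_2$; \textit{match\_clause} then returns the same triple of guard, body, and bindings, and the IH on the body delivers $v_1 = v_2$.
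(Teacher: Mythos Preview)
Your proposal is correct and follows essentially the same approach as the paper: induction on the first derivation with inversion on the second. Your treatment is in fact more detailed than the paper's sketch---in particular, your explicit \textit{eval\_all} determinism lemma and the index-comparison argument for \textit{ECase} spell out steps the paper leaves implicit under ``the clause selector functions are also deterministic''.
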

\begin{proof}
	Induction by the construction of the semantics.
	\begin{itemize}
		\item \ref{OS:lit}, \ref{OS:var}, \ref{OS:funsig} and \ref{OS:fun} are trivial: e.g. a value literal can only be derivated from its expression counterpart.
		\item \ref{OS:tuple} and \ref{OS:map} are similar, \ref{OS:map} is basically a double tuple. According to the induction hypothesis each element in the expression tuple can be evaluated to a single value, so the tuple itself evaluates to the tuple which contains these values. The proof for maps is similar.
		\item \ref{OS:list} The head and the tail expression of the list can be evaluated to a single head and tail value according to the induction hypotheses. So the list constructed from the head and tail expressions can only be evaluated to the value list constructed from the head and tail values.
		\item \ref{OS:case} The induction hypothesis states that the base and the clause body and guard expressions evaluate deterministically. The clause selector functions are also deterministic, so there is only one possible way to select a body expression to evaluate.
		\item The other cases are similar to those presented above.
	\end{itemize}
\qed
\end{proof}

\begin{theorem}[Commutativity]
\label{Prf:comm}
\coqdockw{\ensuremath{\forall}} (\coqdocvar{v}, \coqdocvar{v'} : \coqdocvar{Value}),\coqdoceol
\coqdocindent{1.00em}
$\coqdocvar{eval}\ \textit{``plus'' } [\coqdocvar{v} ; \coqdocvar{v'} \: ] =
\coqdocvar{eval}\ \textit{``plus'' } [\coqdocvar{v'} ; \coqdocvar{v}\: ].$

\end{theorem}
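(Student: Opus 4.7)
The plan is to unfold the definition of \coqdocvar{eval} at the \textit{``plus''} function name and then proceed by case analysis on the two values \coqdocvar{v} and \coqdocvar{v'}. Since \coqdocvar{eval} is an auxiliary function that simulates inter-module calls, its clause for \textit{``plus''} is expected to pattern-match on a two-element list whose entries are both of the form \coqdocvar{VLiteral (Integer z)}, returning \coqdocvar{VLiteral (Integer (z + z'))} in that case and some fixed default (an error atom, or a badarg value) otherwise. The heart of the argument is therefore reduced to a property of the underlying Coq integer addition.

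First I would destruct \coqdocvar{v} and \coqdocvar{v'} according to the constructors of \coqdocvar{Value}. In the principal case, where both are \coqdocvar{VLiteral (Integer x)} and \coqdocvar{VLiteral (Integer y)}, both sides of the equation compute to \coqdocvar{VLiteral (Integer (x + y))} and \coqdocvar{VLiteral (Integer (y + x))}, and the goal follows from \texttt{Z.add\_comm} (the commutativity lemma for the built-in integer type \coqdocvar{Z}). In all remaining cases at least one argument is not an integer literal, and both calls of \coqdocvar{eval} should reduce to the same default result; these cases should close by \texttt{reflexivity} or \texttt{simpl; reflexivity} once the definition of \coqdocvar{eval} is sufficiently unfolded.

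The main obstacle, modest as it is, lies in confirming that the default branch of \coqdocvar{eval\ "plus"} treats its two arguments symmetrically. If the implementation happens to test the first argument before the second (for instance, returning one kind of error when the first argument is ill-typed and another when only the second is), then some cases will not close by reflexivity and the default branch will need to be rewritten into a genuinely symmetric form, or the proof will need to invoke an auxiliary lemma stating that the error value does not depend on argument order. Assuming \coqdocvar{eval} is defined by a single joint pattern match on the pair (or equivalently returns the same default whenever either argument fails to be an integer), the whole proof collapses to a short case split followed by one appeal to \texttt{Z.add\_comm}.
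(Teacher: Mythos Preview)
Your proposal is correct and matches the paper's own argument almost exactly: case-split on the constructors of \coqdocvar{v} and \coqdocvar{v'}, observe that every non-integer case yields the same error value on both sides, and in the sole interesting case appeal to commutativity of integer addition from the Coq standard library. Your remark about the default branch needing to be symmetric is apt; the paper simply asserts that \coqdocvar{eval} indeed returns the same error value regardless of argument order.
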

\begin{proof}
	First we separate cases based on the all possible construction of values (5 constructors, \coqdocvar{v} and \coqdocvar{v'} values, that is 25 cases). In every case where either of the values is not an integer literal, the \verb|eval| function results in the same error value on both side of the equality.
	
	One case is remaining, when both \coqdocvar{v} and \coqdocvar{v'} are integer literals. In this case the definition of \emph{eval} is the addition of these numbers, and the commutativity of this addition has already been proven in the Coq standard library~\cite{coqdocref}.
	\qed
\end{proof}

\section{Application and testing of the semantics}\label{Sec:application}

In the previous section we have defined a big-step operational semantics for the sequential part of the Core Erlang language, which we also formalised in the Coq proof assistant.

In this section we present some use cases. First, we elaborate on the verification of the semantics definition by testing it against the reference implementation of the language, then we show some examples on how we used the formalisation for deriving program behaviour and for proving program equivalence. 

\subsection{Testing of the semantics}

Due to a lack of an up-to-date language specification, we validated the correctness of our semantics definition by comparing it to the behaviour of the code emitted by the official compiler.

To test our formal semantics, we used equivalence partitioning.
We have written tests both in Coq (version 8.11.0) and in Core Erlang (OTP version 22.0) for every type of expression defined in our formalisation, these were the first partitions. Moreover, there have also been  special complex expressions that have needed separate test cases (e.g. using bound variables in \verb|let| expressions, application of recursive functions, etc.), with these we could divide the bigger partitions into smaller ones.

\subsection{Formal program evaluation}\label{Sec:programs}

Now let us demonstrate how Core Erlang programs are evaluated in the formal semantics. For the sake of readability, we use concrete Core Erlang syntax in the proofs, and trivial statements are omitted from the proof tree.

The first example shows how to evaluate a simple expression with binding:

\begin{equation}
\nonumber
\begin{prooftree}
\infer0{\{X : 5\}(X) = 5}
\infer1[\ref{OS:var}]{\bos{\{X : 5\}}{\varnothing}{X}{5}}
\infer1[\ref{OS:let}]{\bos{\varnothing}{\varnothing}{let\ X = 5\ in\ X}{5}}
\end{prooftree}
\end{equation}

The second example is intended to demonstrate the purpose of the closure values. Here at the application of \ref{OS:apply} it is shown that the body of the application is evaluated in the environment given by the closure.
\begin{equation}
\nonumber
\fontsize{9}{12}
\begin{prooftree}
\infer0{\{X : 42\}(X) = 42}
\infer1[\ref{OS:var}]{\bos{\{ X : 42 \}}{\varnothing}{X}{42}}
\infer1[\ref{OS:apply}]{\bos{\{X : 5, Y : \coqdocvar{VClosure}\ (\coqdocvar{inl}\ \{ X : 42 \})\ []\ X \}}{\varnothing}{apply\ Y()}{42}}
\infer1[\ref{OS:let}]{\bos{\{X : 42, Y : \coqdocvar{VClosure}\ (\coqdocvar{inl}\ \{ X : 42 \})\ []\ X \}}{\varnothing}{let\ X\ = 5\ in\ apply\ Y()}{42}}
\infer1[\ref{OS:let}]{\bos{\{X : 42\}}{\varnothing}{let\ Y = fun() \rightarrow X\ in\ let\ X = 5\ in\ apply\ Y()}{42}}
\infer1[\ref{OS:let}]{\bos{\varnothing}{\varnothing}{let\ X = 42\ in\ let\ Y = fun() \rightarrow X\ in\ let\ X = 5\ in\ apply\ Y()}{42}}
\end{prooftree}
\end{equation}

The third example cannot be evaluated in our formalisation, because of infinite recursion. For readability $\Gamma := \{'x'/0 : \coqdocvar{VClosure}\ (\coqdocvar{inr}\ 'x'/0)\ []\ apply\ 'x'/0() \}$ (the environment after the binding is added) is introduced. This example also presents, that to evaluate recursive functions, the evaluation environment can be gotten from the closure environment.

\begin{equation}
\nonumber
\fontsize{9}{12}
\begin{prooftree}
\hypo{...}
\infer1[\ref{OS:apply}]{\bos{\Gamma}{\{'x'/0 : \Gamma\}}{apply\ 'x'/0()}{??}}
\infer1[\ref{OS:apply}]{\bos{\Gamma}{\{'x'/0 : \Gamma\}}{apply\ 'x'/0()}{??}}
\infer1[\ref{OS:letrec}]{\bos{\varnothing}{\varnothing}{letrec\ 'x'/0 = fun() \rightarrow apply\ 'x'/0()\ in\ apply\ 'x'/0()}{??}}
\end{prooftree}
\end{equation}

\subsection{Program equivalence proofs}\label{Sec:equivalence}

Last but not least, let us present some program equivalence proofs demonstrating the usability of this semantics definition implemented in Coq. This is a significant result of the paper since our ultimate goal with the formalisation is to prove refactorings correct.

For the simplicity, we use \verb|+| to refer to the \coqdocvar{append\_vars\_to\_env} function and $e_1 + e_2$ will denote the $\coqdocvar{ECall}\ \textit{``plus'' } [e_1, e_2]$ expression in the following sections.

First, we present a rather simple example of program equivalence.

\begin{example}[Swapping variable values]\label{ex:swap}

\lstset{columns=fullflexible}

\begin{lstlisting}[language=Haskell]
  let X = 5 in let Y = 6 in X + Y
\end{lstlisting}
is equivalent to
\begin{lstlisting}[language=Haskell]
  let X = 6 in let Y = 5 in X + Y
\end{lstlisting}
	
\end{example}
\begin{proof}
	The formal description of the example looks like the following (using abstract syntax for this one step):

\begin{align*}
&\forall t : Value,
\\
&\bos{\varnothing}{\varnothing}{\coqdocvar{ELet } [\Xvar]\ [\coqdocvar{ELiteral } (\coqdocvar{Integer } 5)] (\coqdocvar{ELet } [\Yvar]\ [\coqdocvar{ELiteral } (\coqdocvar{Integer } 6)] 
\\
&\qquad(\coqdocvar{ECall } \textit{``plus'' } [\coqdocvar{EVar } \Xvar ; \coqdocvar{EVar } \Yvar]))}{\coqdocvar{t}}
\Longleftrightarrow
\\
&
\bos{\varnothing}{\varnothing}{\coqdocvar{ELet } [\Xvar]\ [\coqdocvar{ELiteral } (\coqdocvar{Integer } 6)] (\coqdocvar{ELet } [\Yvar]\ [\coqdocvar{ELiteral } (\coqdocvar{Integer } 5)]
\\
&\qquad
(\coqdocvar{ECall } \textit{``plus'' } [\coqdocvar{EVar } \Xvar ; \coqdocvar{EVar } \Yvar]))}{\coqdocvar{t}}
\end{align*}
	
	Both directions of this equivalence are proven exactly the same way, so only the $\Longrightarrow$ direction is presented here. This way, the hypothesis is the left side of the equivalence.
	
	First, this hypothesis should be decomposed. From the two \verb|let| statements, it is known that the 5 and 6 expression literals can be evaluated only to their value counterparts (because of the determinism and the rule \ref{OS:let}). These ones will be associated with X and Y in the evaluation environment for the addition operator (\coqdocvar{ECall} \textit{``plus''}). When this statement is evaluated (rule \ref{OS:call}), then it yields the following hypothesis:

\[
	t = \textit{eval } \textit{``plus'' } [\textit{VLiteral } (\textit{Integer } 5); \textit{VLiteral } (\textit{Integer } 6)]
\]

	Furthermore, our goal can be proven with the derivation tree presented below. In this tree the trivial parts of the proofs are not described for readability (these are e.g. that the 5 and 6 expression literals evaluate to their value counterparts, the length of the expression or variable lists are the same as the evaluated value lists, etc.).
	
	\begin{equation}
	\nonumber
	\fontsize{9}{12}
	\begin{prooftree}
	\hypo{\coqdocvar{eval}\ \textit{``plus'' } [6; 5] = t}
	\infer1[\ref{OS:call}]{\bos{\{X : 6, Y : 5\}}{\varnothing}{X + Y}{t}}	
	\infer1[\ref{OS:let}]{\bos{\{X : 6\}}{\varnothing}{let\ Y = 5\ in\ X + Y}{t}}	
	\infer1[\ref{OS:let}]{\bos{\varnothing}{\varnothing}{let\ X = 6\ in\ let\ Y = 5\ in\ X + Y}{t}}
	\end{prooftree}
	\end{equation}
	
	The only remaining goal is to prove that $\coqdocvar{eval}\ \textit{``plus'' }\ [6; 5] = t$. We have already stated, that $t = \coqdocvar{eval}\ \textit{``plus'' } [5; 6]$, so it is sufficient to prove:
	
\[
	\coqdocvar{eval } \textit{``plus'' } [6; 5] = \coqdocvar{eval } \textit{``plus'' } [5; 6]
\]
	
	The commutativity can be used here (Theorem \ref{Prf:comm}), so we can swap the 5 and 6 values in the parameter list. After this modification, we get reflexivity.\qed
\end{proof}

With the same chain of thought, a more abstract refactoring also can be proved correct in our system.

\begin{example}[Swapping variable expressions]\label{ex:swap_abs}
If  we make the following assumptions:
\begin{align}
\nonumber
	&\bos{\Gamma}{\Delta}{e_1}{v_1} 
	&\bos{\Gamma + \{X : v_2\}}{\Delta}{e_1}{v_1} \\
\nonumber
	&\bos{\Gamma}{\Delta}{e_2}{v_2}
	&\bos{\Gamma + \{X : v_1\}}{\Delta}{e_2}{v_2}
\end{align}
then
	
\lstset{columns=fullflexible}

\begin{lstlisting}[language=Haskell]
  let X = e1 in let Y = e2 in X + Y
\end{lstlisting}
is equivalent to
\begin{lstlisting}[language=Haskell]
  let X = e2 in let Y = e1 in X + Y
\end{lstlisting}
	
\end{example}
\begin{proof}
	In a similar way to the Example \ref{ex:swap}, we reason like this.

\begin{align*}
&\forall \Gamma : Environment, \Delta : Closures, t : Value,\\
\nonumber
&\qquad\bos{\Gamma}{\Delta}{e_1}{v_1} \Longrightarrow
\bos{\Gamma + \{X : v_2\}}{\Delta}{e_1}{v_1}\Longrightarrow \\
\nonumber
&\qquad\bos{\Gamma}{\Delta}{e_2}{v_2}\Longrightarrow
\bos{\Gamma + \{X : v_1\}}{\Delta}{e_2}{v_2} \Longrightarrow\\
&
\bos{\Gamma}{\Delta}{\coqdocvar{ELet } [\Xvar]\ [\coqdocvar{$e_1$}]\ (\coqdocvar{ELet } [\Yvar]\ [\coqdocvar{$e_2$}]
\\&\qquad(\coqdocvar{ECall } \textit{``plus'' }\ [\coqdocvar{EVar } \Xvar ; \coqdocvar{EVar } \Yvar]))}{\coqdocvar{t}}
\Longleftrightarrow
\\&
\bos{\Gamma}{\Delta}{\coqdocvar{ELet } [\Xvar]\ [\coqdocvar{$e_2$}]\ (\coqdocvar{ELet } [\Yvar]\ [\coqdocvar{$e_1$}]
\\&\qquad(\coqdocvar{ECall } \textit{``plus'' }\ [\coqdocvar{EVar } \Xvar ; \coqdocvar{EVar } \Yvar]))}{\coqdocvar{t}}
\end{align*}

	The two directions of this equivalence are proved in exactly the same way, so only the forward ($\Longrightarrow$) direction is presented here.
	
	Now the main hypothesis has two \verb|let| statements in itself. Similarly to the Example \ref{ex:swap}, these statements could only be evaluated with rule \ref{OS:let}, i.e. there are two values ($v_1$ and $v_2$ because of the determinism and the assumptions) to which $e_1$ and $e_2$ evaluates:
	
	$\bos{\Gamma}{\Delta}{e_1}{v_1}$ and $\bos{\Gamma + \{X : v_1\}}{\Delta}{e_2}{v_2}$
	
	Moreover there appeared also a hypothesis: $\bos{\Gamma + \{X : v_1, Y : v_2\}}{\Delta}{X + Y}{\coqdocvar{t}}$. This hypothesis implies that $\coqdocvar{t} = \coqdocvar{eval } \textit{``plus'' } [v_1,v_2]$ because of the evaluation with \ref{OS:call}.
	
	Furthermore, the goal can be solved with the construction of a derivation tree. We denote $\Gamma + \{X : v_2, Y : v_1\}$ with $\Gamma_v$.
	
	\begin{equation}
	\nonumber
	\fontsize{9}{12}
	\begin{prooftree}
	\infer0{\bos{\Gamma}{\Delta}{e_2}{v_2}}
	\infer0[\ref{OS:call}]{\bos{\Gamma_v}{\Delta}{X + Y}{t}}
	\infer0{\bos{\Gamma + \{X : v_2\}}{\Delta}{e_1}{v_1}}
	\infer2[\ref{OS:let}]{\bos{\Gamma + \{X : v_2\}}{\Delta}{let\ Y = e_1\ in\ X + Y}{t}}
	\infer2[\ref{OS:let}]{\bos{\Gamma}{\Delta}{let\ X = e_2\ in\ let\ Y = e_1\ in\ X + Y}{t}}
	\end{prooftree}
	\end{equation}
	
	Now for the \coqdocvar{ECall}, the following derivation tree can be used.
	
	\begin{equation}
	\nonumber
	\fontsize{9}{12}
	\begin{prooftree}
	\hypo{\coqdocvar{get\_value}\ \Gamma_v\ Y = v_1}
	\infer1[\ref{OS:var}]{\bos{\Gamma_v}{\Delta}{Y}{v_1}}
	\hypo{\coqdocvar{get\_value}\ \Gamma_v\ X = v_2}
	\infer1[\ref{OS:var}]{\bos{\Gamma_v}{\Delta}{X}{v_2}}
	\hypo{\coqdocvar{eval}\ \textit{``plus'' } [v_2,v_1] = t}
	\infer3[\ref{OS:call}]{\bos{\Gamma_v}{\Delta}{X + Y}{t}}
	\end{prooftree}
	\end{equation}
	
	As mentioned before, $e_1$ and $e_2$ evaluates to $v_1$ and $v_2$ in the initial environment $\Gamma$ and also in the extended environments (for $e_1$ : $\Gamma + \{X : v_2\}$, for $e_2$ : $\Gamma + \{X : v_1\}$) too. So when the rule \ref{OS:let} applies, we can give a proof that $e_2$ and $e_1$ evaluates to $v_2$ and $v_1$.
	
	After making this statement, we can use the rule \ref{OS:call} to evaluate the \textit{``plus''}. The parameter variables will evaluate to $v_2$ and $v_1$. With this knowledge, we get: $\coqdocvar{eval}\ \textit{``plus'' } [v_2,v_1] = t$. As mentioned before $\coqdocvar{t} = \coqdocvar{eval } \textit{``plus'' } [v_1,v_2]$. So it is sufficient to prove, that:
	
\[
	\coqdocvar{eval } \textit{``plus'' } [v_2,v_1] = \coqdocvar{eval } \textit{``plus'' } [v_1,v_2]
\]
	
 The commutativity of \verb|eval| (Theorem \ref{Prf:comm}) can be used to solve this equality.\qed
\end{proof}

\begin{example}[Swapping variables in simultaneous let]\label{ex:swap_abs2}
	
\lstset{columns=fullflexible}
\begin{lstlisting}[language=Haskell]
  let <X, Y> = <e1, e2> in X + Y
\end{lstlisting}
is equivalent to
\begin{lstlisting}[language=Haskell]
  let <X, Y> = <e2, e1> in X + Y
\end{lstlisting}
	
\end{example}

\begin{proof}
	The proof for this example is very similar to the proof for Example \ref{ex:swap_abs}. The only difference is that one step is enough to evaluate the \verb|let| expressions, and that is the reason why no assumptions are needed.
	\qed
\end{proof}

\begin{example}[Function evaluation]\label{ex:fun_highlight}
	
\lstset{columns=fullflexible}
\begin{lstlisting}[language=Haskell]
  e
\end{lstlisting}
is equivalent to
\begin{lstlisting}[language=Haskell]
  let X = fun() -> e in
    apply X()
\end{lstlisting}
	
\end{example}

\begin{proof}
	In this case, both directions should be proved. At first, we formalise the problem:
	
	\begin{align*}
	&\forall \Gamma : Environment, \Delta : Closures, t : Value,
	\\
	&\qquad
	\bos{\Gamma}{\Delta}{e}{\coqdocvar{t}}
	\Longleftrightarrow
	\\&\qquad
	\bos{\Gamma}{\Delta}{\coqdocvar{ELet } [\Xvar]\ [\coqdocvar{EFun}\ []\ e]\ (\coqdocvar{EApply } (EVar \Xvar)\ []}{\coqdocvar{t}}
	&
	\end{align*}
	
	\noindent
	$\Longleftarrow$ direction:
	
	This can be proved by the construction of a derivation tree. We denote $\Gamma + \{X : \coqdocvar{VClosure}\ (\coqdocvar{inl}\ \Gamma)\ []\ e\}$ with $\Gamma_x$ and the value $\coqdocvar{VClosure}\ (\coqdocvar{inl}\ \Gamma)\ []\ e$ with $cl$ in the tree.
	
	\begin{equation}
	\nonumber
	\begin{prooftree}
	\infer0[\ref{OS:fun}]{\bos{\Gamma}{\Delta}{fun() \rightarrow e}{cl}}
			\infer0[\ref{OS:var}]{\bos{\Gamma_x}{\Delta}{X}{cl}}
			\hypo{\bos{\Gamma}{\Delta}{e}{t}}
		\infer2[\ref{OS:apply}]{\bos{\Gamma_x}{\Delta}{apply\ X()}{t}}
	\infer2[\ref{OS:let}]{\bos{\Gamma}{\Delta}{let\ X = fun() \rightarrow e\ in\ apply\ X()}{\coqdocvar{t}}}
	\end{prooftree}
	\end{equation}
	
	Only left to prove: $\bos{\Gamma}{\Delta}{e}{\coqdocvar{t}}$, but we have the same hypothesis.
	
	\noindent
	$\Longrightarrow$ direction:
	
	This can be proved by the deconstruction of the hypothesis for the \verb|let| expression. First only the \ref{OS:let} could be used for the evaluation. This means that the \coqdocvar{EFun} evaluates to some value, i.e. to the closure $\coqdocvar{VClosure}\ (\coqdocvar{inl}\ \Gamma)\ []\ e$. We get a new hypothesis: $\bos{\Gamma + \{X : \coqdocvar{VClosure}\ (\coqdocvar{inl}\ \Gamma)\ []\ e\}}{\Delta}{apply\ X()}{\coqdocvar{t}}$ (because rule \ref{OS:fun} and the determinism). Then the evaluation continued with the rule \ref{OS:apply}. This means, that the X variable evaluates to some closure (this one is in the environment, so only rule \ref{OS:var} could be applied) and the body of this closure evaluates to \coqdocvar{t} in the environment from the closure extended with the parameter-value bindings (in this case there is none). This means in our case: $\bos{\Gamma}{\Delta}{e}{\coqdocvar{t}}$ which is exactly what we want to prove.
	
	\qed
\end{proof}

To prove these examples in Coq, a significant number of lemmas were needed, such as the exposition of lists, the commutativity of the \verb|eval|, and so forth. However, the proofs mostly consist of the combination of hypotheses similar to the proofs in this paper. Although sometimes additional case separations were needed which resulted in lots of subgoals, these ones were solved very similarly, thus producing code duplication. In the future, these proofs should become simpler with the introduction of smart tactics and additional lemmas.

Moreover, in the concrete implementation for Example \ref{ex:swap_abs} we could use another formulation of the four additional assumptions: if \coqdocvar{$e_1$} does not contain the variables X and Y, then it will evaluate to the same value in the environments combined from these variables. This statement also stands for \coqdocvar{$e_2$}.

\subsection{Evaluation}\label{Sec:evaluation}

We showed that our formal semantics is a powerful tool. We managed to formalise and prove theorems, programs, program equivalence examples. This proves that the semantics is usable indeed. With this one we have a powerful tool to argue about sequential Core Erlang programs. In the previous sections we also mentioned some other approaches to formalise this semantics, and showed why our way is more usable for our purpose.

On the other hand, it also can be seen that this formalisation is not simple to use either in practice, partly because the Coq Proof Assistant makes its users write down everything (trivialities too). Of course this is a necessity of the correctness, however, this property results in complex proofs. As a possibility for future work, it would be very useful to create smart tactics, to simplify out proofs and examples. In addition, this semantics is not complete yet, so it cannot be used for any Core Erlang expression.

\section{Summary}\label{Sec:summary}

\subsection{Future work}

There are several ways to enhance our formalisation, we are going to focus mainly on these short term goals:
\begin{itemize}
	\item Extend semantics with additional expressions (e.g. \verb|try|);
	\item Handle errors (\verb|try| statement);
	\item Handle and log side effects;
	\item Create new lemmas, theorems and tactics to shorten the Coq implementation of the proofs;
	\item Formalise and prove more refactoring strategy.
\end{itemize}

\noindent Our long term goals include:
\begin{itemize}
	\item Advance to Erlang (semantics and syntax);
	\item Distinct primitive operations and inter-module calls;
	\item Formalize the parallel semantics too.
\end{itemize}

\noindent
The final goal of our project is to change the core of a scheme-based refactoring system to a formally verified core.

\subsection{Conclusion}

In this study, we discussed why a language formalisation is needed, then briefly the goal of our project (to prove refactoring correctness). To reach this objective, Erlang was chosen as the prototype language, then several existing Erlang formalisations were compared. Based on these ones, a new natural semantics was introduced for a subset of Erlang. This one was also formalised in Coq Proof Assistant along with essential theorems, proofs (like determinism) and formal expression evaluation examples. We also showed proofs about the meaning-preservation of simple refactoring strategies with our formal semantics. In the future, we intend to extend this formalisation with additional Erlang statements, error handling and more equivalence examples.

\section*{Acknowledgements}

The project has been supported by the European Union, co-financed by the European Social Fund (EFOP-3.6.2-16-2017-00013, ``Thematic Fundamental Research Collaborations Grounding Innovation in Informatics and Infocommunications (3IN)'').

Project no. ED\_18-1-2019-0030 (Application domain specific highly reliable IT solutions subprogramme) has been implemented with the support provided from the National Research, Development and Innovation Fund of Hungary, financed under the Thematic Excellence Programme funding scheme.

%
% ---- Bibliography ----
%
% BibTeX users should specify bibliography style 'splncs04'.
% References will then be sorted and formatted in the correct style.
\bibliographystyle{splncs04}
\bibliography{bibliography}

\begin{thebibliography}{10}
\providecommand{\url}[1]{\texttt{#1}}
\providecommand{\urlprefix}{URL }
\providecommand{\doi}[1]{https://doi.org/#1}

\bibitem{coqdocref}
The {{C}}oq {{P}}roof {{A}}ssistant {{D}}ocumentation.
  \url{https://coq.inria.fr/documentation}, accessed: 2020.02.21.

\bibitem{coreerlang}
Core {{E}}rlang {{F}}ormalization.
  \url{https://github.com/harp-project/Core-Erlang-Formalization}, accessed:
  2020.03.13.

\bibitem{carlier2012first}
Carlier, M., Dubois, C., Gotlieb, A.: {A first step in the design of a formally
  verified constraint-based testing tool: FocalTest}. In: International
  Conference on Tests and Proofs. pp. 35--50. Springer (2012)

\bibitem{carlsson2000core}
Carlsson, R., Gustavsson, B., Johansson, E., Lindgren, T., Nystr{\"o}m, S.O.,
  Pettersson, M., Virding, R.: Core {{E}}rlang 1.0 language specification
  (2004)

\bibitem{de2018bounded}
De~Angelis, E., Fioravanti, F., Palacios, A., Pettorossi, A., Proietti, M.:
  {Bounded Symbolic Execution for Runtime Error Detection of Erlang Programs}.
  arXiv preprint arXiv:1809.04770  (2018)

\bibitem{fredlund2001framework}
Fredlund, L.{\AA}.: A framework for reasoning about {{E}}rlang code. Ph.D.
  thesis, Mikroelektronik och informationsteknik (2001)

\bibitem{fredlund2003verification}
Fredlund, L.{\AA}., Gurov, D., Noll, T., Dam, M., Arts, T., Chugunov, G.: A
  verification tool for {{E}}rlang. International Journal on Software Tools for
  Technology Transfer  \textbf{4}(4),  405--420 (2003)

\bibitem{harrison2017coerl}
Harrison, J.R.: {Towards an Isabelle/HOL Formalisation of Core Erlang}. In:
  Proceedings of the 16th ACM SIGPLAN International Workshop on Erlang. p.
  55–63. Erlang 2017, ACM, New York, NY, USA (2017)

\bibitem{kerl}
K{\H{o}}szegi, J.: {{KE}}rl: {{E}}xecutable semantics for {{E}}rlang. CEUR
  Workshop Proceedings  \textbf{2046},  144--160 (2018)

\bibitem{lanese2018cauder}
Lanese, I., Nishida, N., Palacios, A., Vidal, G.: {CauDEr: a causal-consistent
  reversible debugger for {E}rlang}. In: International Symposium on Functional
  and Logic Programming. pp. 247--263. Springer (2018)

\bibitem{lanese2018theory}
Lanese, I., Nishida, N., Palacios, A., Vidal, G.: A theory of reversibility for
  {E}rlang. Journal of Logical and Algebraic Methods in Programming
  \textbf{100},  71--97 (2018)

\bibitem{lanese2019playing}
Lanese, I., Sangiorgi, D., Zavattaro, G.: {Playing with Bisimulation in
  Erlang}. In: Models, Languages, and Tools for Concurrent and Distributed
  Programming, pp. 71--91. Springer (2019)

\bibitem{neuhausser2007abstraction}
Neuh{\"a}u{\ss}er, M., Noll, T.: Abstraction and model checking of {{C}}ore
  {{E}}rlang programs in {{M}}aude. Electronic Notes in Theoretical Computer
  Science  \textbf{176}(4),  147--163 (2007)

\bibitem{nishida2016reversible}
Nishida, N., Palacios, A., Vidal, G.: A reversible semantics for {{E}}rlang.
  In: International Symposium on Logic-Based Program Synthesis and
  Transformation. pp. 259--274. Springer (2016)

\bibitem{vidal2014towards}
Vidal, G.: Towards symbolic execution in {{E}}rlang. In: International Andrei
  Ershov Memorial Conference on Perspectives of System Informatics. pp.
  351--360. Springer (2014)

\end{thebibliography}
\end{document}